\documentclass[11pt,onecolumn]{IEEEtran}
\usepackage[utf8]{inputenc}
\usepackage{mathtools}
\usepackage{amsmath}
\usepackage{amsthm}
\usepackage{amssymb}
\usepackage{geometry}
\usepackage[bookmarks=false,
 breaklinks=false,pdfborder={0 0 1},backref=false,colorlinks=false]
 {hyperref}

\makeatletter
\@ifundefined{date}{}{\date{}}

\newcommand{\E}{\mathbb E}

\newcommand{\cC}{\mathcal C}
\newcommand{\cK}{\mathcal K}

\newcommand{\cT}{\mathcal T}

\newcommand{\X}{\mathcal X}
\newcommand{\Y}{\mathcal Y}

\theoremstyle{plain}
\newtheorem{thm}{\protect\theoremname}\theoremstyle{definition}
\newtheorem{defn}{\protect\definitionname}\theoremstyle{plain}
\theoremstyle{remark}
\newtheorem{rem}{\protect\remarkname}\theoremstyle{plain}
\newtheorem{prop}{\protect\propositionname}
\newtheorem{lem}{\protect\lemmaname}
\providecommand{\conjecturename}{Conjecture}
\providecommand{\definitionname}{Definition}
\providecommand{\lemmaname}{Lemma}
\providecommand{\propositionname}{Proposition}
\providecommand{\remarkname}{Remark}
\providecommand{\theoremname}{Theorem}

\makeatother

\providecommand{\definitionname}{Definition}
\providecommand{\lemmaname}{Lemma}
\providecommand{\propositionname}{Proposition}
\providecommand{\remarkname}{Remark}
\providecommand{\theoremname}{Theorem}

\begin{document}
\title{Exact Common Information and Exact Channel Synthesis for Correlated
Discrete Sources}
\author{Lei Yu\thanks{L. Yu is with the School of Statistics and Data Science, LPMC, KLMDASR,
and LEBPS, Nankai University, Tianjin 300071, China (e-mail: leiyu@nankai.edu.cn).
This work was supported by the National Key Research and Development
Program of China under grant 2023YFA1009604 and the NSFC under grant
62101286.} }
\maketitle
\begin{abstract}
In this paper, we provide a single-letter characterization of the
exact common information for a pair of correlated discrete sources,
as well as a single-letter characterization of the admissible region
for the shared randomness rate and the communication rate in exact
channel synthesis. 
\end{abstract}

\begin{IEEEkeywords}
Discrete sources, Exact common information, Exact channel synthesis,
Communication complexity
\end{IEEEkeywords}

\section{Introduction}

\subsection{Common information }

The common information problem asks for the minimum amount of common
randomness needed to generate two correlated random variables at two
separate terminals; see Fig. \ref{fig:dss}. This minimum amount of
common randomness is coined as the common information between the
two correlated random variables. In Wyner's formulation \cite{WynerCI},
Wyner considered approximate generation in relative entropy and obtained
the common information for a bivariate joint distribution $\pi_{XY}$,
\[
C_{{\rm Wyner}}(\pi_{XY})=\inf_{W:\,P_{XY}=\pi_{XY},\,X-W-Y}I(X,Y;W).
\]

The exact common information problem, introduced by Kumar, Li, and
El Gamal \cite{KLE2014}, is different, where the synthesized distribution
must equal the target distribution exactly, rather than merely approach
it asymptotically in a weak metric. The common random variable is
first generated and then independently processed at the two terminals.
The exact common information is the minimum asymptotic rate of this
common randomness. We next provide a mathematical formulation of this
problem.

Let $\pi_{XY}$ be a probability distribution on a standard Borel
product space. At blocklength $n$, an exact synthesis code consists
of a discrete random variable $W$ and conditional distributions $P_{X^{n}|W},P_{Y^{n}|W},$
such that 
\[
P_{X^{n}Y^{n}}=\sum_{w}P_{W}(w)P_{X^{n}|W=w}P_{Y^{n}|W=w}=\pi^{\otimes n}_{XY},
\]
where the conditional independence relation $X^{n}-W-Y^{n}$ holds.

The exact common information is the asymptotic normalized entropy
of the smallest such common random variable. The variable-length formulation
is equivalent because for a prefix-free code, the expected number
$L(W)$ of output bits satisfies $H(W)\leq L(W)<H(W)+1,$ and therefore
the difference is negligible after normalization by $n$. Accordingly,
\begin{equation}
C_{{\rm Exact}}(\pi_{XY})=\lim_{n\to\infty}\frac{1}{n}\inf\left\{ H(W):P_{X^{n}Y^{n}}=\pi^{\otimes n}_{XY},\ X^{n}-W-Y^{n}\right\} .\label{eq:exact-multiletter}
\end{equation}
See detailed explanation in \cite{KLE2014}.

A natural interpretation of exact common information can be described
via the following thought experiment. Consider a random particle (or
planet) that, at a certain instant, decomposes into multiple components.
These components inherit shared common randomness and then evolve
independently, governed jointly by this common randomness as well
as their respective individual randomness. After some time has elapsed,
given the observed joint distribution of the components, one may wish
to estimate the amount of common randomness they possess. Or conversely,
given the quantity of common randomness shared among them, we aim
to characterize the set of feasible joint distributions these components
can attain. This is exactly the common information problem.

The exact common information is no smaller than Wyner's common information.
However, unlike Wyner's common information, the single-letter characterization
of the exact common information is rarely known, except for a certain
class of sources satisfying $C_{\mathrm{Exact}}=C_{{\rm Wyner}}$
\cite{KLE2014,Vellambi2018,YuTan2020_exact} and the doubly symmetric
binary source (DSBS) \cite{YuTan2020_exact}. For continuous log-concave
sources, Li and El Gamal \cite{LiElgamal2017} showed that the exact
common information is finite and an explicit upper bound was given.
Yu and Tan \cite{YuTan2018,yu2020corrections,YuTan2020_exact} introduced
the notion of Rényi common information, which is defined as the minimum
common rate when the the relative entropy is replaced by more general
divergences---the family of Rényi divergences. The family of Rényi
common information unifies Wyner's common information and exact common
information since it includes them as two special cases with the Rényi
order equal to $1$ and $\infty$, respectively. Li and El Gamal's
upper bound in \cite{LiElgamal2017} play a key role in proving this
equivalence for continuous sources. By this equivalence and the Rényi
soft-covering argument, Yu and Tan \cite{YuTan2020_exact} significantly
improved Li and El Gamal's upper bound for correlated Gaussian sources
and conjectured this new bound is tight. This conjecture was confirmed
in positive by the present author recently \cite{yu2026exact}, yielding
that for a bivariate Gaussian distribution $\pi_{\rho}$, 
\[
C_{{\rm Exact}}(\pi_{\rho})=\frac{1}{2}\ln\frac{1+\rho}{1-\rho}+\frac{\rho}{1+\rho}.
\]

\begin{figure}
\centering \setlength{\unitlength}{0.05cm} 
{ \begin{picture}(100,60) 
\put(5,30){\line(1,0){20}} \put(25,30){\vector(1,1){18}}
\put(25,30){\vector(1,-1){18}} \put(44,42){\framebox(30,12){$P_{X^{n}|W}$}}
\put(44,6){\framebox(30,12){$P_{Y^{n}|W}$}} \put(74,48){\vector(1,0){22}}
\put(74,12){\vector(1,0){22}} \put(10,33){%
\mbox{%
$W$%
}} \put(80,50){%
\mbox{%
$X^{n}$%
}} \put(80,14){%
\mbox{%
$Y^{n}$%
}} \end{picture}}

\caption{ The common information problem, also known as the distributed source
synthesis problem.}\label{fig:dss}
\end{figure}

\subsection{Distributed channel synthesis }

Common information has natural applications in distributed channel
synthesis \cite{Bennett02,Win02,cuff13,bennett14quantum,Harsha10}.
The latter problem, illustrated in Fig. \ref{fig:dcs}, refers to
the problem of determining the minimum communication rate required
to generate a bivariate source $\left\{ \left(X^{n},Y^{n}\right)\right\} _{n\in\mathbb{N}}$
with $X^{n}$ generated at the encoder and $Y^{n}$ generated at the
decoder such that the induced joint distribution $P_{X^{n}Y^{n}}$
approximately or exactly equals $\pi^{\otimes n}_{XY}$ for all $n\in\mathbb{N}$.
In this paper, we focus on exact synthesis, equivalently, requiring
$P_{X^{n}Y^{n}}=\pi^{\otimes n}_{XY}$. When there is no shared randomness,
the exact channel synthesis problem reduces to the exact common information
problem.

Consider the distributed source simulation setup depicted in Fig.
\ref{fig:dcs}. A sender and a receiver share a uniformly distributed
source of randomness\footnote{For simplicity, we assume that $e^{nR_{0}}$ is integers.}
$K\sim\mathrm{Unif}(\cK),\cK:=[e^{nR_{0}}]$. The sender has access
to a memoryless source $X^{n}\sim\pi^{\otimes n}_{X}$ that is independent
of $K$, and wants to transmit information about the correlation between
correlated sources $\left(X^{n},Y^{n}\right)\sim\pi^{\otimes n}_{XY}$
to the receiver. Given the shared randomness and the correlation information
from the sender, the receiver generates a memoryless source $Y^{n}\sim\pi^{\otimes n}_{Y|X}(\cdot|X^{n})$.
Specifically, given $X^{n}$ and $K$, the sender generates a ``message''
(i.e., a discrete random variable) $M$ by a random mapping $P_{M|X^{n}K}$,
and then sends it to the receiver error free. Upon accessing to $K$
and receiving $M$, the receiver generates a source $Y^{n}$ by a
random mapping $P_{Y^{n}|MK}$. The joint distribution induced by
this code is 
\begin{align*}
 & P_{X^{n}KMY^{n}}:=P_{X^{n}}P_{K}P_{M|X^{n}K}P_{Y^{n}|MK}.
\end{align*}
Now we would like to determine the minimum amount of communication
such that $P_{X^{n}Y^{n}}=\pi^{\otimes n}_{XY}$ (or equivalently,
$P_{Y^{n}|X^{n}}=\pi^{\otimes n}_{Y|X}$).
\begin{defn}
The admissible region of shared randomness rate and communication
rate for the exact channel synthesis problem is defined as 
\begin{align}
 & \mathcal{R}_{\mathrm{Exact}}(\pi_{XY})\nonumber \\
 & =\mathrm{cl}\bigcup_{n\ge1}\left\{ \begin{array}{l}
(R_{0},R):\exists(P_{M|X^{n}K},P_{Y^{n}|MK})\textrm{ s.t.}\\
\qquad P_{Y^{n}|X^{n}}=\pi^{\otimes n}_{Y|X},\\
\qquad R\ge\frac{1}{n}H(M|K)
\end{array}\right\} .\label{eq:-6}
\end{align}
\end{defn}
The original operational definition of this admissible region adopts
a variable-length coding formulation, where the message $M$ is compressed
via a prefix-free code, and the communication rate $R$ is quantified
as the bit rate of this code. However, same as the common information
setting, this variable-length formulation is equivalent to the formulation
above; see Proposition 1 of \cite{YuTan2020b}.

In contrast to exact channel synthesis, total variation (TV)-approximate
synthesis only requires the TV distance between the empirical distribution
$P_{X^{n}Y^{n}}$ and the target distribution $\pi^{\otimes n}_{XY}$
to vanish asymptotically. Early studies by Bennett et al. \cite{Bennett02}
and Winter \cite{Win02} investigated exact and TV-approximate channel
synthesis under the assumption of unlimited shared encoder--decoder
randomness, proving that the minimal asymptotic communication rate
for both synthesis schemes equals the mutual information $I(X;Y)$
of the target distribution $(X,Y)\sim\pi_{XY}$. For TV-approximate
synthesis, Cuff \cite{cuff13} and Bennett et al. \cite{bennett14quantum}
completely characterized the fundamental tradeoff between communication
rate and shared randomness rate. 

As for exact synthesis, Harsha et al. \cite{Harsha10} adopted a
rejection sampling framework to analyze one-shot exact synthesis for
discrete sources, establishing a bounded shared randomness cost with
a mild increment in expected description length. Li and El Gamal \cite{LiElgamal2018}
further refined the finite shared randomness upper bound for exact
synthesis in the discrete setting with a negligible communication
rate penalty. Sriramu and Wagner \cite{sriramu2024optimal} characterized
the second-order rate of the communication rate for exact synthesis,
when the shared randomness rate is assumed to be infinite. In addition,
exact channel synthesis with fixed-length coding under the assumption
of zero or infinite shared randomness was studied by Cubitt, Leung,
Matthews, and Winter \cite{Cubitt11}. They provided a fully characterization
of the communicate rate for correlated sources defined on finite alphabets,
but the problem formulation is not applied to continuous sources,
since from their characterization we can see that fixed-length coding
in general cannot synthesize a continuous channel with continuous
input at a finite shared randomness  rate, even under the assumption
of infinite shared randomness. 

Prior works mainly focused on exact synthesis under extreme randomness
conditions, except for \cite{YuTan2020b}. Yu and Tan \cite{YuTan2020b}
fully characterized the optimal tradeoff between shared randomness
rate and communication rate for exact synthesis of the DSBS (i.e.,
the binary symmetric channel with the binary uniform input), and verified
that exact synthesis requires a strictly higher communication rate
than its TV-approximate counterpart. This is the first example for
which the optimal rate tradeoff is known to be different from the
TV-approximate counterpart. The Gaussian setting remained an open
problem until recent work by the author \cite{yu2026exact}. 

Other related works include \cite{cao2024channelsimulationfinite,li2026largedevrst,oufkir2026exponentsrandomness,Oufkir2026}.
We refer readers to the comprehensive reviews on common information
and channel synthesis presented in \cite{yu2022common} and \cite{li2024channelsimulation}.

\begin{figure*}
\centering \setlength{\unitlength}{0.06cm} { \begin{picture}(140,35)
\put(-5,10){\vector(1,0){30}} \put(-10,13){%
\mbox{%
$X^{n}\sim\pi^{\otimes n}_{X}$%
}} \put(25,4){\framebox(30,12){$P_{M|X^{n}K}$}} \put(55,10){\vector(1,0){30}}
\put(65,13){%
\mbox{%
$M$%
}} \put(85,4){\framebox(30,12){$P_{Y^{n}|MK}$}} \put(115,10){\vector(1,0){20}}
\put(120,13){%
\mbox{%
$Y^{n}\sim\pi^{\otimes n}_{Y|X}(\cdot|X^{n})$%
}} \put(40,27){\vector(0,-1){11}} \put(-5,30){%
\mbox{%
$K\sim\mathrm{Unif}[e^{nR_{0}}]$%
}} \put(100,27){\vector(0,-1){11}} \put(-5,27){\line(1,0){105}}
\end{picture}}

\caption{ The exact channel synthesis problem. }\label{fig:dcs}
\end{figure*}

\subsection{Multi-letter characterizations}

All alphabets in this paper are finite, and all logarithms are natural.
For probability distributions $P$ on $\mathcal{X}$ and $Q$ on $\mathcal{Y}$,
let $\cC(P,Q)$ denote the set of their couplings. For a nonnegative,
possibly extended-valued cost $c:\mathcal{X}\times\mathcal{Y}\to[0,\infty]$,
define 
\begin{align}
\cT_{c}(P,Q) & :=\sup_{R\in\cC(P,Q)}\E_{R}[c(X,Y)],\label{eq:transport}\\
\Phi_{c}(P,Q) & :=\cT_{c}(P,Q)-H(P)-H(Q).\label{eq:phi}
\end{align}
We use the conventions $0\log0=0$ and $0\cdot\infty=0$. For finite-valued
costs, the supremum in \eqref{eq:transport} is attained.

Given a source distribution $\pi=\pi_{XY}$ on $\mathcal{X}\times\mathcal{Y}$,
write 
\[
c_{\pi}(x,y):=\log\frac{1}{\pi(x,y)},
\]
where $c_{\pi}(x,y)=+\infty$ when $\pi(x,y)=0$. In this case, $\cT_{c_{\pi}}(P,Q)$
is known as the maximum cross-entropy over all couplings of $(P,Q)$.
Define a functional $\Gamma$ as 
\begin{equation}
\Gamma(\pi):=\inf_{\substack{P_{WXY}=P_{W}P_{X|W}P_{Y|W},\\
P_{XY}=\pi
}
}\E_{W}\!\left[\Phi_{c_{\pi}}(P_{X|W},P_{Y|W})\right],\label{eq:gamma}
\end{equation}
where $W$ ranges over finite alphabets. Equivalently, 
\begin{equation}
\Gamma(\pi)=\inf_{\substack{X-W-Y,\\
P_{XY}=\pi
}
}\left\{ -H(XY|W)+\sum_{w}P_{W}(w)\cT_{c_{\pi}}(P_{X|w},P_{Y|w})\right\} .\label{eq:gamma-equivalent}
\end{equation}
Here the entropy is evaluated under the synthesis law $P_{W}P_{X|W}P_{Y|W}$,
and hence, $H(XY|W)=H(X|W)+H(Y|W)$. This functional was first introduced
in \cite[Theorem~1]{YuTan2020_exact} to give a multi-letter characterization
of the exact common information. 
\begin{prop}[{\cite[Theorem~1]{YuTan2020_exact}}]
\label{prop:equivalence} For a source with distribution $\pi=\pi_{XY}$
defined on a finite alphabet, 
\begin{equation}
C_{\mathrm{Exact}}(\pi)=\lim_{n\to\infty}\frac{1}{n}\Gamma(\pi^{\otimes n}).\label{eq:equivalence}
\end{equation}
\end{prop}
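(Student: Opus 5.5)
The plan is to prove the two inequalities separately. The converse is a short one-shot argument. Achievability will go through a blocked soft-covering construction with a ``complement'' correction that makes the synthesis exact.

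\emph{Converse.} Fix $n$ and an exact code $(W,P_{X^{n}|W},P_{Y^{n}|W})$ with $\sum_{w}P_{W}(w)P_{X^{n}|W=w}P_{Y^{n}|W=w}=\pi^{\otimes n}$. Every term of this sum is dominated by the total, so for each $w$ and each $(x^{n},y^{n})$ we get
\[
P_{W}(w)P_{X^{n}|W}(x^{n}|w)P_{Y^{n}|W}(y^{n}|w)\le\pi^{\otimes n}(x^{n},y^{n}).
\]
Hence on the support of $P_{X^{n}|w}\otimes P_{Y^{n}|w}$,
\[
-\log P_{W}(w)\ge c_{\pi^{\otimes n}}(x^{n},y^{n})+\log P_{X^{n}|W}(x^{n}|w)+\log P_{Y^{n}|W}(y^{n}|w).
\]
This inequality also holds off the support of $\pi^{\otimes n}$: there the product must vanish, so such pairs carry no coupling mass. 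Now integrate against an arbitrary coupling $R\in\cC(P_{X^{n}|w},P_{Y^{n}|w})$, whose support lies inside the product support. The two log terms integrate to $-H(P_{X^{n}|w})-H(P_{Y^{n}|w})$ because the marginals are fixed. Taking the supremum over $R$ gives $-\log P_{W}(w)\ge\Phi_{c_{\pi^{\otimes n}}}(P_{X^{n}|w},P_{Y^{n}|w})$. Averaging over $W$ then yields $H(W)\ge\Gamma(\pi^{\otimes n})$, and therefore $C_{\rm Exact}(\pi)\ge\liminf_{n}\frac{1}{n}\Gamma(\pi^{\otimes n})$.

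I would also check that the limit exists. The sequence $\Gamma(\pi^{\otimes n})$ is subadditive: take products of the auxiliary variables, and note that $\cT$ is superadditive on product marginals while the entropies add. This gives the wrong direction for the cost, so some care is needed. If subadditivity fails, I would simply work with $\liminf$ and $\limsup$ throughout.

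\emph{Achievability.} Fix $k$ and a finite-alphabet $P_{W}P_{X^{k}|W}P_{Y^{k}|W}$ with marginal $\pi^{\otimes k}$, and treat $\tilde\pi:=\pi^{\otimes k}$ as a single super-letter source. The aim is to show that $C_{\rm Exact}(\tilde\pi)\le\E_{W}[\Phi_{c_{\tilde\pi}}(P_{X^{k}|W},P_{Y^{k}|W})]+\epsilon$, and then to divide by $k$. The construction has the following steps.
\begin{enumerate}
\item Over $m$ super-letters, draw a random codebook of $e^{mr}$ sequences $w^{m}$ that are i.i.d.\ within each conditional type of $P_{W}$.
\item Define $Q$ as the uniform mixture over the codebook of $P_{X^{km}|w^{m}}\otimes P_{Y^{km}|w^{m}}$.
\item Show that with positive probability $Q\le(1+\delta)\tilde\pi^{\otimes m}$ pointwise, i.e.\ that the Rényi-$\infty$ soft-covering holds.
\item Write $\tilde\pi^{\otimes m}=\frac{1}{1+\delta}Q+\frac{\delta}{1+\delta}Q'$ with $Q'\ge0$.
\item Synthesize $Q'$ with the trivial code, which generates $(x,y)$ directly and costs at most $km\log|\X||\Y|$ nats.
\end{enumerate}
The mixing flag adds at most $h(\delta)$, so the total entropy is at most $mr+\delta km\log|\X||\Y|+h(\delta)+1$. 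This rate is acceptable provided $\delta\to0$.

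\emph{Main obstacle.} The hard step is the pointwise covering bound in step 3, namely showing that $r$ slightly above $\E_{W}\Phi$ suffices. For a fixed pair $(x^{km},y^{km})$, the ratio $Q/\tilde\pi^{\otimes m}$ is an average of $e^{mr}$ i.i.d.\ terms, and its expectation equals $\tilde\pi^{\otimes m}/\tilde\pi^{\otimes m}=1$. A union bound over the polynomially many joint types, together with a Chernoff/Bernstein bound, requires each term to be at most $e^{m(r-\epsilon)}$ times the mean. The term for a codeword $w^{m}$ equals
\[
\frac{\prod_{i}P(x_{i}|w_{i})P(y_{i}|w_{i})}{\tilde\pi(x_{i},y_{i})}.
\]
Its logarithm is maximized over $(x,y)$ sequences conditionally typical with respect to some $R_{w}\in\cC(P_{X|w},P_{Y|w})$, and the maximum is $m\sum_{w}P_{W}(w)\bigl(\E_{R_{w}}[c_{\tilde\pi}]-H(P_{X|w})-H(P_{Y|w})\bigr)$. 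Two points need care. The first is the conditional-type bookkeeping that makes the empirical $R_{w}$ range exactly over couplings, so that taking the supremum yields $\Phi$. The second is the $\pm\infty$ convention: zero-probability pairs under $\tilde\pi$ never appear in $Q$. I expect to handle nontypical $w$-codewords by expurgating them into $Q'$, at a cost of $o(m)$.
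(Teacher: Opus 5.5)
Your converse is correct, and it is the standard one-shot argument behind \cite[Theorem~1]{YuTan2020_exact}; the paper cites that result rather than reproving it. Three small points on the converse:
\begin{enumerate}
\item The bound $\frac{1}{n}\inf H(W)\ge\frac{1}{n}\Gamma(\pi^{\otimes n})$ holds for every $n$, and the left side converges. So you actually get $C_{\rm Exact}(\pi)\ge\limsup_{n}\frac{1}{n}\Gamma(\pi^{\otimes n})$. Together with achievability for every $k$, this gives existence of the limit, so no subadditivity detour is needed.
\item Subadditivity does hold anyway. Any coupling of $(P_{1}\otimes P_{2},Q_{1}\otimes Q_{2})$ projects onto couplings of $(P_{i},Q_{i})$, which gives the equality \eqref{eq:product-transport}.
\item If the code's $W$ is countably valued, a Carath\'eodory argument reduces it to a finitely valued one, since all the $\Phi$-values are bounded there.
\end{enumerate}

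The genuine gap is Step 3. You build $Q$ from the \emph{untruncated} conditionals $P_{X^{km}|w^{m}}\otimes P_{Y^{km}|w^{m}}$, and then the per-codeword ratio is not bounded by $e^{m(\E_{W}\Phi+\epsilon)}$. Define $A_{w}:=\max_{(x,y)}\log\frac{P(x|w)P(y|w)}{\tilde\pi(x,y)}$. Because the coordinates can be chosen independently, the maximum of the ratio over $(x,y)$ is $\exp\bigl(m\sum_{w}\hat{P}_{w^{m}}(w)A_{w}\bigr)$. It is attained at the position-wise maximizing pair. That pair has a degenerate conditional type that is not close to any coupling of $(P_{X^{k}|w},P_{Y^{k}|w})$, so your claim that the maximizer is conditionally typical for some $R_{w}$ is false.

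This is not a looseness of the Chernoff bound. For every codebook, take any typical codeword $w_{1}^{m}$ and this pair; then $Q/\tilde\pi^{\otimes m}\ge\exp\bigl(m(\sum_{w}\hat{P}_{w_{1}^{m}}(w)A_{w}-r)\bigr)$. So Step 3 fails deterministically whenever $r<\E_{W}A_{W}$, and in general $\E_{W}A_{W}>\E_{W}\Phi$. For example, take the DSBS with Wyner's auxiliary: $X=W\oplus A$, $Y=W\oplus B$, with $A,B\sim\mathrm{Bern}(a)$ independent and $0<a<\frac12$. Then $A_{w}=\log\frac{2(1-a)^{2}}{1-p}$ and $\Phi_{c_{\pi}}(P_{X|w},P_{Y|w})=(1-2a)A_{w}<A_{w}$. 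Expurgating atypical codewords does not touch this, because the offending $(x,y)$ are atypical for a typical codeword. As written, your argument only yields $C_{\rm Exact}(\pi)\le\frac{1}{k}\E_{W}A_{W}$, which is a characterization by the max-over-pairs functional rather than by $\Gamma$.

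The missing idea is to truncate the conditionals as well. Replace $P_{X^{km}|w^{m}}$ and $P_{Y^{km}|w^{m}}$ by their normalized restrictions to the conditionally typical sets given $w^{m}$, and push the removed mass into the residual $Q'$. Then for every pair charged by a codeword, each $w$-slice of its conditional type has two properties:
\begin{enumerate}
\item it is supported on $\mathrm{supp}P_{X^{k}|w}\times\mathrm{supp}P_{Y^{k}|w}$, where $c_{\tilde\pi}$ is finite and bounded;
\item its marginals are within $\epsilon$ of $(P_{X^{k}|w},P_{Y^{k}|w})$, so it is within $O(\epsilon)$ in total variation of a genuine coupling.
\end{enumerate}
This gives the bound $e^{m(\E_{W}\Phi+O(\epsilon))}$ you need. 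The normalizing constants tend to $1$, so $\E[Q]\le(1+o(1))\tilde\pi^{\otimes m}$ survives.

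Two smaller fixes are also needed:
\begin{enumerate}
\item The mean-one computation requires i.i.d.\ $P_{W}^{\otimes m}$ codewords; uniform draws from a type class do not average to $\tilde\pi^{\otimes m}$.
\item The ratio $Q/\tilde\pi^{\otimes m}$ depends on the sequences, not only on their joint type. So the union bound must run over all $(|\X||\Y|)^{km}$ pairs, which the doubly exponential Chernoff tail easily absorbs.
\end{enumerate}
With these repairs, your architecture is the route of Yu--Tan that the paper refers to: the one-shot converse, $\infty$-R\'enyi soft covering, and then mixing with the trivial code $W=(X^{n},Y^{n})$ for the residual.
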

To derive single-letter characterization, it suffices to prove additivity
of $\Gamma(\pi^{\otimes n})$. 

We next provide a multi-letter characterization for exact channel
synthesis. Define a rate region: 
\begin{align}
\mathcal{R}(\pi) & :=\left\{ \begin{array}{rcl}
(R_{0},R)\in[0,\infty)^{2} & : & \exists P_{W}P_{X|W}P_{Y|W}\textrm{ s.t. }\\
P_{XY} & = & \pi,\\
R & \ge & I(W;X),\\
R+R_{0} & \ge & \E_{W}\!\left[\Phi_{c_{\pi}}(P_{X|W},P_{Y|W})\right]
\end{array}\right\} .\label{eq:UB-1}
\end{align}
Then, 
\begin{align}
\frac{1}{n}\mathcal{R}(\pi^{\otimes n}) & =\left\{ \begin{array}{rcl}
(R_{0},R)\in[0,\infty)^{2} & : & \exists P_{W}P_{X^{n}|W}P_{Y^{n}|W}\textrm{ s.t. }\\
P_{X^{n}Y^{n}} & = & \pi^{\otimes n},\\
R & \ge & \frac{1}{n}I(W;X^{n}),\\
R+R_{0} & \ge & \frac{1}{n}\E_{W}\!\left[\Phi_{c_{\pi^{\otimes n}}}(P_{X^{n}|W},P_{Y^{n}|W})\right]
\end{array}\right\} .\label{eq:UB-1-1}
\end{align}

\begin{prop}[{\cite[Theorem~1]{YuTan2020b}}]
\label{prop:equivalence-1} For a source with distribution $\pi=\pi_{XY}$
defined on a finite alphabet, 
\[
\mathcal{R}_{\mathrm{Exact}}(\pi)=\mathrm{cl}\bigcup_{n\ge1}\frac{1}{n}\mathcal{R}(\pi^{\otimes n}).
\]
\end{prop}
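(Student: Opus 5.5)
The plan is to prove the two inclusions separately. The converse ($\subseteq$) follows from a single-letter entropy inequality applied to the code's own variables. The achievability ($\supseteq$) uses a layered exact-simulation scheme run over many independent copies of the blocklength-$n$ super-letter.

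\emph{Converse.} Fix a code $(P_{M|X^nK},P_{Y^n|MK})$ with $P_{Y^n|X^n}=\pi_{Y|X}^{\otimes n}$, and set $W:=(M,K)$.
\begin{itemize}
\item Since $Y^n$ is generated from $(M,K)$ only, $X^n-W-Y^n$ holds and $P_{X^nY^n}=\pi^{\otimes n}$.
\item Since $K$ is independent of $X^n$, we have $I(W;X^n)=I(M;X^n|K)\le H(M|K)\le nR$.
\item For the sum rate, $H(W)=H(K)+H(M|K)\le n(R_0+R)$.
\end{itemize}
It then suffices to prove the lemma $\E_W[\Phi_{c_{\pi^{\otimes n}}}(P_{X^n|W},P_{Y^n|W})]\le H(W)$, valid for any $X-W-Y$ with $P_{XY}=\pi$.

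For this lemma, fix $w$ with $P_W(w)>0$. From $\pi(x,y)=\sum_{w'}P_W(w')P_{X|w'}(x)P_{Y|w'}(y)$ we get $P_W(w)P_{X|w}(x)P_{Y|w}(y)\le\pi(x,y)$. Consequently, on the product support $\mathrm{supp}P_{X|w}\times\mathrm{supp}P_{Y|w}$, which contains the support of every coupling $R\in\cC(P_{X|w},P_{Y|w})$, we have $c_\pi(x,y)\le-\log P_W(w)-\log P_{X|w}(x)-\log P_{Y|w}(y)$. In particular $c_\pi$ is finite there. Taking $\E_R$ and the supremum over $R$ gives $\Phi_{c_\pi}(P_{X|w},P_{Y|w})\le-\log P_W(w)$, and averaging over $w$ gives the lemma. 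Hence every achievable pair lies in $\frac1n\mathcal{R}(\pi^{\otimes n})$, and taking closures finishes this direction.

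\emph{Achievability.} Fix $n$ and $P_WP_{X^n|W}P_{Y^n|W}$ feasible for $\frac1n\mathcal{R}(\pi^{\otimes n})$. We treat $(W,X^n,Y^n)$ as a super-letter and work with $k$ i.i.d. copies, so it suffices to handle the case $n=1$ asymptotically in $k$. The scheme has two layers.

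\textbf{Step 1.} Given $X^k$, the encoder generates $W^k\sim P_{W|X}^{\otimes k}$ exactly. This uses exact channel simulation with shared randomness (Harsha et al.\ rejection sampling, in the form of Li--El Gamal with finite shared randomness), at communication rate $\approx I(W;X)$.

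\textbf{Step 2.} Conditioned on $W^k$, the pair $(X^k,Y^k)$ must be produced exactly as $\prod_iP_{X|W_i}P_{Y|W_i}$, with only $X^k$ available at the encoder. The target is an additional common variable $V$, generated from $(X^k,W^k)$ and available to both terminals, with $H(V|W^k)\approx k\,\E_W\Phi_{c_\pi}(P_{X|W},P_{Y|W})$. This should be built from the structure behind Proposition~1: decompose each conditional product law into its conditional joint type classes, and use a max-cross-entropy coupling $R_w$ to design the common part so that $X^k-(W^k,V)-Y^k$ holds.

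\textbf{Step 3.} The encoder sends $V$ using the shared randomness $K$ as a one-time-pad-like substitute for part of $V$. This trades $R_0$ for $R$ along the line $R+R_0\ge\E\Phi$, while keeping $R\ge I(W;X)$. Finally, by a time-sharing and closure argument, these super-letter rates are normalized by $n$.

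I expect Step~2 to be the main obstacle: showing that exact conditional synthesis given $W^k$ costs only the $\Phi$ rate and not $H(W)$-type rates. The plan there is to reuse the type-class/soft-covering construction from the proof of Proposition~1 (that is, \cite{YuTan2020_exact}) conditionally on $W^k$. One must then check that the common randomness it produces can be split into a part drawn from $K$ and a part sent in $M$ without destroying exactness.
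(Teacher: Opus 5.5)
Your converse is correct. With $W=(M,K)$, the pointwise bound $P_W(w)P_{X^n|W=w}P_{Y^n|W=w}\le\pi^{\otimes n}$ gives $\E_W\Phi_{c_{\pi^{\otimes n}}}(P_{X^n|W},P_{Y^n|W})\le H(W)\le n(R_0+R)$, and $I(W;X^n)=I(M;X^n|K)\le nR$. (The paper does not reprove this proposition. It imports it from Yu--Tan, where achievability goes through random codes for the $\infty$-R\'enyi criterion.)

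The achievability has a genuine gap, and the problem is not only that Step~2 is unfinished: the architecture cannot reach the boundary. In your scheme $W^k$ is produced at the encoder and is also needed at the decoder, since Step~2 and the final sampling of $Y^k\sim\prod_iP_{Y|W_i}$ both use it. So $W^k$ is a function of $(M,K)$. Since $W^k\sim P_W^{\otimes k}$ exactly, this gives $k(R_0+R)\ge H(M,K)\ge kH(W)$.

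Your own lemma shows $\E_W\Phi_{c_\pi}(P_{X|W},P_{Y|W})\le H(W)$. The inequality is strict as soon as, for some $w$, a maximizing coupling charges a pair $(x,y)$ that also receives mass from some $w'\neq w$. This happens, e.g., whenever $W$ is non-degenerate and all $P_{X|w}\otimes P_{Y|w}$ have full support, as for the binary symmetric decomposition of a DSBS. So no scheme that realizes i.i.d.\ copies of the given $W$ as common randomness can attain $R_0+R=\E_W\Phi$. Your budget $H(V|W^k)\approx k\E_W\Phi$ on top of $W^k$ would even give sum rate about $H(W)+\E_W\Phi$.

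Note also that once $W^k$ with $P_{W^k|X^k}=P_{W|X}^{\otimes k}$ is common, Step~2 is superfluous. The decoder can just draw $Y^k\sim P_{Y|W}^{\otimes k}(\cdot|W^k)$, and $P_{X^kY^k}=\pi^{\otimes k}$ already holds. The real task is to avoid paying $H(W)$ at all.

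The missing idea is to replace $W^k$ by a sparse common index, as in Yu--Tan's $\infty$-R\'enyi random-coding route.
\begin{enumerate}
\item Take a random codebook $\{w^k(\ell,m)\}$ of $W$-sequences of composition $P_W$, indexed by the key $\ell\in\cK$ and a message $m$ of rate $R$.
\item Let the encoder pick $m$ from $(X^k,\ell)$ with a likelihood encoder.
\item Let the decoder output $Y^k\sim P_{Y|W}^{\otimes k}(\cdot|w^k(\ell,m))$, with the conditional laws suitably restricted to conditionally typical sets.
\end{enumerate}
The common variable $(\ell,m)$ now has entropy $k(R_0+R)$. The price is that $P_{X^kY^k}$ is only approximately $\pi^{\otimes k}$, so one must control $\max_{x^k,y^k}P_{X^kY^k}(x^k,y^k)/\pi^{\otimes k}(x^k,y^k)$, i.e.\ prove an $\infty$-R\'enyi soft-covering bound.

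Triples $(w^k,x^k,y^k)$ in which $x^k$ and $y^k$ are separately conditionally typical with the codeword can have any joint type in $\cC(P_{XW},P_{YW})$. So this worst case is governed by the maximal cross-entropy coupling. That is why the threshold is $R_0+R>\E_W\Phi$ rather than $I(W;XY)$, with $R>I(W;X)$ ensuring that each key's sub-codebook covers $\pi_X$.

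Exactness is then recovered from the mixture decomposition $\pi_{Y|X}^{\otimes k}(\cdot|x^k)=\frac{1}{1+\epsilon}P_{Y^k|X^k}(\cdot|x^k)+\frac{\epsilon}{1+\epsilon}Q(\cdot|x^k)$. On the rare branch the encoder sends $Y^k\sim Q$ uncoded, at vanishing rate. Without a construction of this kind, the inclusion $\supseteq$ is unproved.
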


\subsection{Main results}

Our first main contribution is the additivity of $\Gamma$. 
\begin{thm}
\label{thm:tensorization} For arbitrary sources $\pi_{1}$ and $\pi_{2}$
defined on the finite-alphabet $\mathcal{X}\times\mathcal{Y}$, 
\begin{equation}
\Gamma(\pi_{1}\otimes\pi_{2})=\Gamma(\pi_{1})+\Gamma(\pi_{2}).\label{eq:tensorization}
\end{equation}
Consequently, for every positive integer $n$, 
\begin{equation}
\Gamma(\pi^{\otimes n})=n\Gamma(\pi).\label{eq:n-tensorization}
\end{equation}
\end{thm}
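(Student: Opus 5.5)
The plan is to prove the two inequalities separately. The ``$\le$'' direction comes from a product construction. The ``$\ge$'' direction uses a chain-rule argument that splits any synthesis for $\pi_{1}\otimes\pi_{2}$ into a synthesis for $\pi_{1}$ and a conditional synthesis for $\pi_{2}$. The statement \eqref{eq:n-tensorization} then follows from \eqref{eq:tensorization} by induction on $n$.

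\emph{Subadditivity.} Take near-optimal $(W_{1},P_{X_{1}|W_{1}},P_{Y_{1}|W_{1}})$ for $\pi_{1}$ and $(W_{2},\dots)$ for $\pi_{2}$, independent of each other, and set $W=(W_{1},W_{2})$. Then $X^{2}-W-Y^{2}$ holds, $P_{X^{2}Y^{2}}=\pi_{1}\otimes\pi_{2}$, and the entropies add. For the transport term, note that $c_{\pi_{1}\otimes\pi_{2}}=c_{\pi_{1}}(x_{1},y_{1})+c_{\pi_{2}}(x_{2},y_{2})$. Any coupling of $P_{X_{1}|w_{1}}\otimes P_{X_{2}|w_{2}}$ and $P_{Y_{1}|w_{1}}\otimes P_{Y_{2}|w_{2}}$ projects onto couplings of the coordinate pairs. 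Hence the transport value is at most $\cT_{c_{\pi_{1}}}+\cT_{c_{\pi_{2}}}$, and the product of optimal couplings attains this value. So $\Phi$ is additive on product pairs, which gives $\Gamma(\pi_{1}\otimes\pi_{2})\le\Gamma(\pi_{1})+\Gamma(\pi_{2})$.

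\emph{Superadditivity.} Fix any $W$ with $X^{2}-W-Y^{2}$ and $P_{X^{2}Y^{2}}=\pi_{1}\otimes\pi_{2}$. Coordinate 1 uses auxiliary $W$ itself, which is feasible for $\pi_{1}$. Coordinate 2 uses $W'=(W,X_{1},Y_{1})$. This is feasible for $\pi_{2}$ because, under the synthesis law, $X_{2}$ and $Y_{2}$ are conditionally independent given $(W,X_{1},Y_{1})$, with conditionals $P_{X_{2}|W X_{1}}$ and $P_{Y_{2}|W Y_{1}}$. The entropy term splits exactly:
\[
H(X^{2}|W)+H(Y^{2}|W)=H(X_{1}|W)+H(Y_{1}|W)+H(X_{2}|W X_{1}Y_{1})+H(Y_{2}|WX_{1}Y_{1}).
\]
For the transport term, I would use gluing. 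Take an optimal coupling $R_{1}^{w}$ of $(P_{X_{1}|w},P_{Y_{1}|w})$. For each $(x_{1},y_{1})$, take an optimal coupling of $(P_{X_{2}|w,x_{1}},P_{Y_{2}|w,y_{1}})$, and glue these together. This yields a valid coupling of $P_{X^{2}|w},P_{Y^{2}|w}$, so
\[
\cT_{c_{\pi_{1}\otimes\pi_{2}}}(P_{X^{2}|w},P_{Y^{2}|w})\ge\cT_{c_{\pi_{1}}}(P_{X_{1}|w},P_{Y_{1}|w})+\E_{R_{1}^{w}}\big[\cT_{c_{\pi_{2}}}(P_{X_{2}|w,X_{1}},P_{Y_{2}|w,Y_{1}})\big].
\]

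\emph{Main obstacle.} The inner expectation above is taken under the optimal coupling $R_{1}^{w}$. The functional $\Gamma(\pi_{2})$, however, needs it under the product $P_{X_{1}|w}P_{Y_{1}|w}$, which is the law of $W'$ that makes the marginal equal $\pi_{2}$. The entropy terms do not see this difference, since they depend only on the marginals of $(X_{1},Y_{1})$. The second-coordinate auxiliary should therefore be built with $(X_{1},Y_{1})\sim R_{1}^{w}$. Then the only thing to verify is the marginal constraint
\[
\sum_{w}P_{W}(w)\sum_{x_{1},y_{1}}R_{1}^{w}(x_{1},y_{1})P_{X_{2}|w,x_{1}}P_{Y_{2}|w,y_{1}}=\pi_{2}.
\]
My first attempt is a reduction: use Kantorovich duality, $\cT_{c}(P,Q)=\min\{\E_{P}f+\E_{Q}g:f\oplus g\ge c\}$, to rewrite $\Gamma$ as a saddle-point problem in which dual potentials replace the coupling. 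In that form the second-stage quantity is linear in the weighting of $(x_{1},y_{1})$, and product potentials $f_{1}\oplus f_{2}$, $g_{1}\oplus g_{2}$ are feasible for $c_{\pi_{1}\otimes\pi_{2}}$. The weighting mismatch can then be absorbed by optimizing the first-stage potentials. If that fails, I would fall back on showing that $W$ can be taken without loss of generality so that $P_{X^{2}|w}$ and $P_{Y^{2}|w}$ have a structure (for instance, supported on a set where the optimal coupling is forced) that makes the two weightings induce the same $\pi_{2}$-marginal.
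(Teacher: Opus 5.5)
\textbf{There is a genuine gap in the superadditivity direction.} Your subadditivity argument and the induction giving \eqref{eq:n-tensorization} are fine and agree with the paper. The problem is the same-order split (coordinate 1 with $W$, coordinate 2 with $(W,X_1,Y_1)$): you never close it, and it cannot be closed as set up.

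Your gluing weights $\cT_{c_{\pi_2}}(P_{X_2|w,x_1},P_{Y_2|w,y_1})$ by the optimal coupling $R_1^w$. The feasible auxiliary for $\pi_2$, however, needs the product weights $P_{X_1|w}P_{Y_1|w}$. Reweighting the auxiliary by $R_1^w$ breaks the marginal constraint in general, because given $W$ the variable $X_2$ depends on $X_1$ and $Y_2$ depends on $Y_1$.

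In fact the per-$w$ inequality your split requires is false. Take $P=Q$ uniform on $\{(0,0),(1,1)\}$, $c_1(x,y)=K\cdot 1\{x=y\}$ with $K\ge 1$, and $c_2(x,y)=1\{x\neq y\}$.
\begin{itemize}
\item Every coupling has cost $Kt+(1-t)\le K$, where $t=\Pr(X_1=Y_1)$, so $\cT_{c_1+c_2}(P,Q)=K$.
\item On the other side, $\cT_{c_1}(P_{X_1},Q_{Y_1})+\E_{P_{X_1}\otimes Q_{Y_1}}\cT_{c_2}(P_{X_2|X_1},Q_{Y_2|Y_1})=K+1/2$.
\item The entropy terms on both sides coincide.
\item With $R_1$ diagonal, the reweighted law of $(X_2,Y_2)$ is perfectly correlated, while the true one is independent.
\end{itemize}

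Your duality fallback points the wrong way. Potentials $f_1\oplus f_2$, $g_1\oplus g_2$ that are feasible for $c_{\pi_1\otimes\pi_2}$ only \emph{upper}-bound the two-letter transport cost. Superadditivity needs a \emph{lower} bound, i.e.\ an explicit coupling. The ``without loss of generality'' restriction on $W$ is not specified.

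\textbf{The missing idea is to glue in opposite orders.} For each $w$, take optimal $R(\cdot|y_2)\in\cC(P_{X_1|w},P_{Y_1|w,y_2})$ and $S(\cdot|x_1)\in\cC(P_{X_2|w,x_1},P_{Y_2|w})$, and set $T=R(x_1,y_1|y_2)S(x_2,y_2|x_1)$. This $T$ is a coupling of $(P_{X^2|w},P_{Y^2|w})$. Under it, $(X_1,Y_2)\sim P_{X_1|w}\otimes P_{Y_2|w}$, which is exactly their law under the synthesis, since $X_1\perp Y_2$ given $W$. Hence the conditioning variables carry the correct weights:
\begin{align*}
\cT_{c_1+c_2}(P_{X^2|w},P_{Y^2|w}) &\ge \sum_{y_2}P_{Y_2|W}(y_2|w)\,\cT_{c_1}(P_{X_1|w},P_{Y_1|w,y_2})\\
&\quad+\sum_{x_1}P_{X_1|W}(x_1|w)\,\cT_{c_2}(P_{X_2|w,x_1},P_{Y_2|w}).
\end{align*}

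Pair this with the chain rule taken in opposite orders: $H(X^2|w)=H(X_1|w)+H(X_2|w,X_1)$ and $H(Y^2|w)=H(Y_2|w)+H(Y_1|w,Y_2)$. Then $U=(W,Y_2)$ and $V=(W,X_1)$ satisfy $X_1-U-Y_1$ and $X_2-V-Y_2$, with marginals exactly $\pi_1$ and $\pi_2$. There is no mismatch, because $P_{X_1|U}=P_{X_1|W}$ and $P_{Y_2|V}=P_{Y_2|W}$. This is how the paper proceeds, replacing the same-order coupling of Yu and Tan. The infinite costs where $\pi_i(x,y)=0$ are handled by truncating at $M$ and letting $M\to\infty$.
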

Combining this additivity property with Yu--Tan's multi-letter characterization
in Proposition \ref{prop:equivalence} yields a single-letter characterization
of the exact common information. 
\begin{thm}
\label{thm:For-a-source}For a source with distribution $\pi=\pi_{XY}$
defined on a finite alphabet, 
\begin{equation}
C_{\mathrm{Exact}}(\pi)=\Gamma(\pi).\label{eq:ECI}
\end{equation}
\end{thm}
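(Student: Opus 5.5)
Theorem~\ref{thm:For-a-source} follows directly from Proposition~\ref{prop:equivalence} combined with Theorem~\ref{thm:tensorization}, so the plan is simply to chain the two. By Proposition~\ref{prop:equivalence}, for the finite-alphabet source $\pi=\pi_{XY}$ we have
\begin{equation*}
C_{\mathrm{Exact}}(\pi)=\lim_{n\to\infty}\frac{1}{n}\Gamma(\pi^{\otimes n}).
\end{equation*}
By \eqref{eq:n-tensorization}, $\Gamma(\pi^{\otimes n})=n\Gamma(\pi)$ for every positive integer $n$. Every term of the sequence $\frac{1}{n}\Gamma(\pi^{\otimes n})$ therefore equals $\Gamma(\pi)$, the limit exists trivially, and it equals $\Gamma(\pi)$. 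This gives \eqref{eq:ECI}.

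The one point to check is that \eqref{eq:n-tensorization} applies to $\pi^{\otimes n}$, which lives on $\mathcal{X}^n\times\mathcal{Y}^n$ rather than on $\mathcal{X}\times\mathcal{Y}$. Theorem~\ref{thm:tensorization} is stated for two sources on the same alphabet, but the underlying additivity $\Gamma(\pi_1\otimes\pi_2)=\Gamma(\pi_1)+\Gamma(\pi_2)$ does not depend on the alphabets being equal. One may enlarge both alphabets to a common finite superset, extending each $\pi_i$ by zero; the cost $c_{\pi_i}=+\infty$ off the support, so the added symbols never affect $\Gamma$. Applying the two-source additivity with $\pi_1=\pi^{\otimes(n-1)}$ and $\pi_2=\pi$ then gives $\Gamma(\pi^{\otimes n})=\Gamma(\pi^{\otimes(n-1)})+\Gamma(\pi)$, and induction on $n$ yields \eqref{eq:n-tensorization}. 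This is exactly the ``Consequently'' clause of Theorem~\ref{thm:tensorization}, which we take as given.

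One should also confirm that $\Gamma(\pi)$ is finite, so that the equation $C_{\mathrm{Exact}}(\pi)=\Gamma(\pi)$ is a genuine identity between real numbers. Taking $W=(X,Y)$ gives point-mass conditionals; then $\Phi_{c_\pi}=\log\frac{1}{\pi(x,y)}$, and its expectation is $H(\pi)<\infty$, so $\Gamma(\pi)\le H(\pi)$. For the lower bound, fix any $w$ and let $R=P_{X|w}\otimes P_{Y|w}$. Conditional independence together with $P_{XY}=\pi$ implies that $R$ is supported inside $\mathrm{supp}\,\pi$. Then
\begin{equation*}
\cT_{c_\pi}\ge \E_{R}[c_\pi] = H(R)+D(R\|\pi) \ge H(R) = H(P_{X|w})+H(P_{Y|w}),
\end{equation*}
so $\Phi_{c_\pi}(P_{X|w},P_{Y|w})\ge0$. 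Hence $0\le\Gamma(\pi)\le H(\pi)$.

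There is no substantive obstacle in this step: all the difficulty has been moved into Theorem~\ref{thm:tensorization}, whose proof is given separately.
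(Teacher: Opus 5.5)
Your proposal is correct and matches the paper's argument: Proposition~\ref{prop:equivalence} combined with the $n$-fold additivity \eqref{eq:n-tensorization} from Theorem~\ref{thm:tensorization}, with your check $0\le\Gamma(\pi)\le H(\pi)$ being a valid but optional addition. One small quibble on the alphabet detour: it is unnecessary, since Lemmas~\ref{lem:gluing} and~\ref{lem:superadditivity} never use equal alphabets, and the padded symbols are harmless not because $c_{\pi}=+\infty$ there (that alone would push the supremum up, not leave it unchanged) but because $P_{XY}=\pi$ forces every $P_{X|w}$ and $P_{Y|w}$ to put zero mass on them.
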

\begin{rem}
Yu and Tan \cite{YuTan2020_exact} showed that the $\infty$-Rényi
common information coincides with the exact common information, and
thus, the $\infty$-Rényi common information also admits the same
expression in \eqref{eq:ECI}. 
\end{rem}

Obverse that 
\begin{equation}
\mathcal{R}(\pi)=\{(R_{0},R)\in[0,\infty)^{2}:(R_{0}+R,R)\in\mathcal{R}_{\mathrm{S}}(\pi)\},\label{eq:sum-rate}
\end{equation}
where 
\begin{align}
\mathcal{R}_{\mathrm{S}}(\pi) & :=\left\{ \begin{array}{rcl}
(R_{\mathrm{S}},R)\in[0,\infty)^{2} & : & \exists P_{W}P_{X|W}P_{Y|W}\textrm{ s.t. }\\
P_{XY} & = & \pi,\\
R & \ge & I(W;X),\\
R_{\mathrm{S}} & \ge & \E_{W}\!\left[\Phi_{c_{\pi}}(P_{X|W},P_{Y|W})\right]
\end{array}\right\} .\label{eq:UB-1-2}
\end{align}
The proof idea of Theorem \ref{thm:tensorization} can be applied
to exact channel synthesis, leading to our second main contribution. 
\begin{thm}
\label{thm:ECS} For arbitrary sources $\pi_{1}$ and $\pi_{2}$ defined
on the finite-alphabet $\mathcal{X}\times\mathcal{Y}$, 
\begin{equation}
\mathcal{R}_{\mathrm{S}}(\pi_{1}\otimes\pi_{2})=\mathcal{R}_{\mathrm{S}}(\pi_{1})+\mathcal{R}_{\mathrm{S}}(\pi_{2}).\label{eq:tensorization-1}
\end{equation}
Consequently, for every positive integer $n$, 
\begin{equation}
\mathcal{R}_{\mathrm{S}}(\pi^{\otimes n})=n\mathcal{R}_{\mathrm{S}}(\pi).\label{eq:n-tensorization-1}
\end{equation}
\end{thm}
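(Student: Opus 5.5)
The plan is to prove the two inclusions separately, using only the definition \eqref{eq:UB-1-2}. The identity $\mathcal{R}_{\mathrm{S}}(\pi^{\otimes n})=n\mathcal{R}_{\mathrm{S}}(\pi)$ then follows from \eqref{eq:tensorization-1} by induction on $n$, provided we also check that $\mathcal{R}_{\mathrm{S}}(\pi)$ is convex, so that $\mathcal{R}_{\mathrm{S}}(\pi)+\mathcal{R}_{\mathrm{S}}(\pi)=2\mathcal{R}_{\mathrm{S}}(\pi)$. Convexity should follow by time-sharing: take a larger $W$ carrying a switch index, since both constraints are linear in $P_W$.

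\emph{Inclusion $\supseteq$.} Take feasible $P_{W_i}P_{X_i|W_i}P_{Y_i|W_i}$ for $\pi_i$, $i=1,2$, and set $W=(W_1,W_2)$ with independent components. Then $I(W;X_1X_2)=I(W_1;X_1)+I(W_2;X_2)$, and the entropies add. For the transport term, write $c_{\pi_1\otimes\pi_2}(x_1x_2,y_1y_2)=c_{\pi_1}(x_1,y_1)+c_{\pi_2}(x_2,y_2)$. Any coupling of $P_{X_1|w_1}\otimes P_{X_2|w_2}$ and $P_{Y_1|w_1}\otimes P_{Y_2|w_2}$ has $(X_i,Y_i)$-marginals that are couplings of the $i$-th factors. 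Hence $\cT_{c_{\pi_1\otimes\pi_2}}\le\cT_{c_{\pi_1}}+\cT_{c_{\pi_2}}$, with equality attained by product couplings. So $\Phi$ is additive here, and the sum point is feasible.

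\emph{Inclusion $\subseteq$.} Given $W$ with $X_1X_2-W-Y_1Y_2$ and $P_{X_1X_2Y_1Y_2}=\pi_1\otimes\pi_2$, define $U_1=(W,Y_2)$ and $U_2=(W,X_1)$.
\begin{itemize}
\item \textbf{Markov chains.} Conditional independence of $(X_1,X_2)$ and $(Y_1,Y_2)$ given $W$ gives $X_1-U_1-Y_1$ and $X_2-U_2-Y_2$, and the marginals are $\pi_1$ and $\pi_2$.
\item \textbf{Rate split.} Since $X_1\perp Y_2\mid W$ we have $I(U_1;X_1)=I(W;X_1)$. Since $X_1\perp X_2$ we have $I(U_2;X_2)=I(W;X_2|X_1)$. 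The chain rule then yields $I(W;X_1X_2)=I(U_1;X_1)+I(U_2;X_2)$ exactly.
\item \textbf{Entropy split.} By the chain rule together with $H(X_1|W)=H(X_1|WY_2)$ and $H(Y_2|W)=H(Y_2|WX_1)$,
\[
H(X_1X_2|W)+H(Y_1Y_2|W)=H(X_1|U_1)+H(Y_1|U_1)+H(X_2|U_2)+H(Y_2|U_2).
\]
\end{itemize}
It therefore remains to show, for each $w$,
\[
\cT_{c_{\pi_1\otimes\pi_2}}(P_{X_1X_2|w},P_{Y_1Y_2|w})\ \ge\ \sum_{y_2}P(y_2|w)\,\cT_{c_{\pi_1}}(P_{X_1|wy_2},P_{Y_1|wy_2})+\sum_{x_1}P(x_1|w)\,\cT_{c_{\pi_2}}(P_{X_2|wx_1},P_{Y_2|wx_1}).
\]

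\emph{Key step: a gluing construction.} This is the main obstacle. Let $R_1^{y_2}$ be an optimal (or near-optimal) coupling for the first family and $R_2^{x_1}$ one for the second. Because $X_1\perp Y_2$ given $W=w$, define
\[
Q(x_1,x_2,y_1,y_2):=\frac{R_1^{y_2}(x_1,y_1)\,R_2^{x_1}(x_2,y_2)}{P(x_1|w)\,P(y_2|w)}.
\]
Summing over $(y_1,y_2)$ gives $P(x_1|w)P(x_2|w,x_1)=P_{X_1X_2|w}$. This uses the facts that the $x_1$-marginal of $R_1^{y_2}$ is $P(x_1|w,y_2)=P(x_1|w)$ and that $R_2^{x_1}$ has first marginal $P_{X_2|wx_1}$. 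Symmetrically, the $(y_1,y_2)$-marginal is $P_{Y_1Y_2|w}$, so $Q$ is a valid coupling. Moreover the $(x_1,y_1,y_2)$-marginal of $Q$ is $P(y_2|w)R_1^{y_2}(x_1,y_1)$ and the $(x_1,x_2,y_2)$-marginal is $P(x_1|w)R_2^{x_1}(x_2,y_2)$. The expected additive cost $c_{\pi_1}+c_{\pi_2}$ under $Q$ therefore equals exactly the right-hand side.

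For infinite costs, the convention $0\cdot\infty=0$ and nonnegativity make the same computation valid, and taking near-optimal couplings handles non-attainment. Combining the three splits gives $\E_W\Phi_{c_{\pi_1\otimes\pi_2}}\ge\E_{U_1}\Phi_{c_{\pi_1}}+\E_{U_2}\Phi_{c_{\pi_2}}$. Hence any $(R_{\mathrm S},R)\in\mathcal{R}_{\mathrm S}(\pi_1\otimes\pi_2)$ dominates a sum of points of the two regions. Since each region is upward closed, the point itself decomposes as such a sum.
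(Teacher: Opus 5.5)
The $\supseteq$ direction, the auxiliaries $U_1=(W,Y_2)$, $U_2=(W,X_1)$, the opposite-order entropy chain rules and the rate identity all match the paper's proof. The problem is the key step: the $Q$ you display is not a coupling. Take $R_1^{y_2}\in\cC(P_{X_1|w},P_{Y_1|wy_2})$ and $R_2^{x_1}\in\cC(P_{X_2|wx_1},P_{Y_2|w})$ to be probability measures, as you state. Summing your $Q$ over $y_1$ cancels $P(x_1|w)$ and leaves $R_2^{x_1}(x_2,y_2)/P(y_2|w)$. Its sum over $y_2$ is not $P(x_1|w)P(x_2|w,x_1)$. In fact, using $\sum_{x_2}R_2^{x_1}(x_2,y_2)=P(y_2|w)$, the total mass of $Q$ is $|\mathrm{supp}\,P_{X_1|w}|\cdot|\mathrm{supp}\,P_{Y_2|w}|$. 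For example, with $X_1,X_2,Y_1,Y_2$ i.i.d.\ uniform bits, $W$ constant and product couplings, you get $Q\equiv 1/4$ on $16$ points. So the marginal identities you assert are false for the $Q$ you defined, and so is the cost identity that rests on them.

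The fix is to drop the denominator: $Q(x_1,x_2,y_1,y_2)=R_1^{y_2}(x_1,y_1)\,R_2^{x_1}(x_2,y_2)$. This is exactly the paper's opposite-order gluing (Lemma~\ref{lem:gluing}). Your formula with the division is the standard gluing of the joint laws $\tilde R_1(x_1,y_1,y_2)=P(y_2|w)R_1^{y_2}(x_1,y_1)$ and $\tilde R_2(x_1,x_2,y_2)=P(x_1|w)R_2^{x_1}(x_2,y_2)$ along their common $(X_1,Y_2)$-marginal $P(x_1|w)P(y_2|w)$; you mixed the two conventions. With the correction, every marginal claim you make holds (these are the identities the paper checks), and the rest of your argument goes through.

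Parts of your write-up are more careful than the paper's:
\begin{itemize}
\item Your chain-rule derivation $I(W;X_1X_2)=I(W;X_1)+I(W;X_2|X_1)$ is cleaner than the paper's entropy bookkeeping.
\item You make the upward-closedness step explicit.
\item Your direct treatment of extended-valued costs is fine.
\item You prove convexity of $\mathcal{R}_{\mathrm S}(\pi)$ by time-sharing. This is genuinely needed to pass from the two-fold identity to $n\mathcal{R}_{\mathrm S}(\pi)$, and the paper leaves it implicit. The claimed linearity of $I(W;X)$ in $P_W$ is valid because $P_X=\pi_X$ is fixed across feasible syntheses.
\end{itemize}
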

Combining this additivity property with \eqref{eq:sum-rate} and Yu--Tan's
multi-letter characterization in Proposition \ref{prop:equivalence-1}
yields a single-letter characterization of the admissible rate region
for exact channel synthesis. 
\begin{thm}
\label{thm:For-a-source-1}For a source with distribution $\pi=\pi_{XY}$
defined on a finite alphabet, 
\begin{equation}
\mathcal{R}_{\mathrm{Exact}}(\pi)=\mathcal{R}(\pi).\label{eq:Gaussian}
\end{equation}
\end{thm}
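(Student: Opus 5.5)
We will deduce Theorem \ref{thm:For-a-source-1} from Proposition \ref{prop:equivalence-1}, the identity \eqref{eq:sum-rate}, and the additivity in Theorem \ref{thm:ECS}. The plan has three steps.

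First, we show that $\frac{1}{n}\mathcal{R}(\pi^{\otimes n})=\mathcal{R}(\pi)$ for every $n$. Apply \eqref{eq:sum-rate} to the source $\pi^{\otimes n}$. This gives that $(R_0,R)\in\frac1n\mathcal{R}(\pi^{\otimes n})$ iff $(nR_0,nR)\in\mathcal{R}(\pi^{\otimes n})$, iff $(n(R_0+R),nR)\in\mathcal{R}_{\mathrm S}(\pi^{\otimes n})$. By \eqref{eq:n-tensorization-1}, $\mathcal{R}_{\mathrm S}(\pi^{\otimes n})=n\mathcal{R}_{\mathrm S}(\pi)$. The condition is therefore equivalent to $(R_0+R,R)\in\mathcal{R}_{\mathrm S}(\pi)$, which by \eqref{eq:sum-rate} again is $(R_0,R)\in\mathcal{R}(\pi)$.

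One point needs care: $n\mathcal{R}_{\mathrm S}(\pi)$ must mean the scaled set $\{n s:s\in\mathcal{R}_{\mathrm S}(\pi)\}$. Theorem \ref{thm:ECS} gives the $n$-fold Minkowski sum $\mathcal{R}_{\mathrm S}(\pi)+\cdots+\mathcal{R}_{\mathrm S}(\pi)$. These two sets agree once $\mathcal{R}_{\mathrm S}(\pi)$ is convex. Convexity holds by time sharing: enlarge $W$ to $(W,Q)$ with a time-sharing variable $Q$ independent of $(X,Y)$. Then $P_{XY}=\pi$ and the Markov chain $X-(W,Q)-Y$ are preserved, $I(WQ;X)=\sum_q P_Q(q)I(W;X|Q=q)$, and the $\Phi$-term is linear in the mixture. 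So both constraints average, and the region is convex. In particular $\frac1n\mathcal{R}(\pi^{\otimes n})$ does not depend on $n$.

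Second, Proposition \ref{prop:equivalence-1} gives $\mathcal{R}_{\mathrm{Exact}}(\pi)=\mathrm{cl}\,\mathcal{R}(\pi)$. It remains to show that $\mathcal{R}(\pi)$ is closed, and this is the main obstacle. Without it, the identity holds only up to closure.

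Our approach to closedness is as follows.
\begin{itemize}
\item We bound the cardinality of $W$ by a Carathéodory/support-lemma argument. The constraints are $P_{XY}=\pi$, $I(W;X)$, and $\E_W[\Phi]$. These are expressed as expectations over $W$ of continuous-in-spirit functionals of the pair $(P_{X|w},P_{Y|w})$. This lets us restrict to $|\mathcal W|\le|\X||\Y|+2$.
\item We then take limits along a compact parameter set. The map $(P_{X|w},P_{Y|w})\mapsto\Phi_{c_\pi}$ is lower semicontinuous: $\cT_{c_\pi}$ is a supremum of functions that are lower semicontinuous in the marginals, and the entropies are continuous. This lower semicontinuity is the direction needed for the constraint $R+R_0\ge\E[\Phi]$ to survive limits.
\item The value $\cT$ may be $+\infty$ when the supports are incompatible. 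This is harmless, because such points simply fail the constraint.
\item We also need $\cT_{c_\pi}$ to be lower semicontinuous with infinite costs. To see this, approximate the cost by the truncations $c\wedge L$; each truncated $\cT$ is continuous, and the supremum over $L$ is lower semicontinuous.
\end{itemize}
Combining the three steps gives $\mathcal{R}_{\mathrm{Exact}}(\pi)=\mathcal{R}(\pi)$.
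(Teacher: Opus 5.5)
Your proposal is correct and follows the paper's route: the paper obtains Theorem \ref{thm:For-a-source-1} by combining Theorem \ref{thm:ECS}, the identity \eqref{eq:sum-rate}, and Proposition \ref{prop:equivalence-1}. The paper leaves two points implicit, and your arguments for both are sound: convexity of $\mathcal{R}_{\mathrm{S}}(\pi)$ via time sharing, so that the $n$-fold Minkowski sum equals the scaled set; and closedness of $\mathcal{R}(\pi)$ via a Carath\'eodory bound on $|\mathcal{W}|$, compactness, and lower semicontinuity of $\Phi_{c_\pi}$, where the bound $\Phi_{c_\pi}\ge-\log(|\X||\Y|)$ also handles atoms of $W$ whose mass vanishes in the limit.
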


In fact, by examining the proofs, Theorems \ref{thm:tensorization}
and \ref{thm:ECS} and the converse parts of Theorem \ref{thm:For-a-source}
and \ref{thm:For-a-source-1} can be extended to Polish spaces $\mathcal{X},\mathcal{Y}$
in the present form, where the auxiliary random variable $W$ is defined
on an arbitrary Polish space. However, the achievability parts of
Theorem \ref{thm:For-a-source} and \ref{thm:For-a-source-1} was
proven by constructing random codes for the $\infty$-Rényi divergence
setting and establishing the equivalence between the exact information
quantities and the $\infty$-Rényi counterparts, which were extended
to countablely infinite sources and continuous sources in \cite{YuTan2020_exact}
and \cite{YuTan2020b} under certain additional assumptions. 

\section{Proof of Theorem \ref{thm:tensorization}}

In this section, we prove Theorem \ref{thm:tensorization} by using
an opposite-order coupling, which differs from the same-order coupling
constructed in \cite{YuTan2020_exact}.

\subsection{Opposite-order couplings}
\begin{lem}[Opposite-order gluing]
\label{lem:gluing} Let $P_{X_{1}X_{2}}$ and $Q_{Y_{1}Y_{2}}$ be
arbitrary distributions. For every $y_{2}$, let $R(\cdot|y_{2})\in\cC\bigl(P_{X_{1}},Q_{Y_{1}|Y_{2}=y_{2}}\bigr),$
and for every $x_{1}$, let $S(\cdot|x_{1})\in\cC\bigl(P_{X_{2}|X_{1}=x_{1}},Q_{Y_{2}}\bigr).$
Define 
\begin{equation}
T(x_{1},x_{2},y_{1},y_{2}):=R(x_{1},y_{1}|y_{2})S(x_{2},y_{2}|x_{1}).\label{eq:gluing-plan}
\end{equation}
Then, $T\in\cC\bigl(P_{X_{1}X_{2}},Q_{Y_{1}Y_{2}}\bigr).$ Moreover,
$T_{X_{1}Y_{2}}=P_{X_{1}}\otimes Q_{Y_{2}}.$ 
\end{lem}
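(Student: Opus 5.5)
The plan is to verify the claims by direct marginalization of the product formula \eqref{eq:gluing-plan}, summing out variables in a carefully chosen order. The structure of $T$ makes the pair $(x_1,y_2)$ the natural pivot. Given $y_2$, the kernel $R(\cdot|y_2)$ distributes $(x_1,y_1)$. Given $x_1$, the kernel $S(\cdot|x_1)$ distributes $(x_2,y_2)$. So we first compute the $(X_1,Y_2)$ marginal, and then recover the other two marginals from it.

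\emph{Step 1: the $(X_1,Y_2)$ marginal.} Sum over $y_1$ and $x_2$. Since $R(\cdot|y_2)\in\cC(P_{X_1},Q_{Y_1|Y_2=y_2})$, we have $\sum_{y_1}R(x_1,y_1|y_2)=P_{X_1}(x_1)$ for every $y_2$. Since $S(\cdot|x_1)\in\cC(P_{X_2|X_1=x_1},Q_{Y_2})$, we have $\sum_{x_2}S(x_2,y_2|x_1)=Q_{Y_2}(y_2)$ for every $x_1$. The two sums factor because $R$ does not depend on $x_2$ and $S$ does not depend on $y_1$. This gives $T_{X_1Y_2}(x_1,y_2)=P_{X_1}(x_1)Q_{Y_2}(y_2)$, which is the ``moreover'' claim. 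It also shows that $T$ is a probability distribution, since it is nonnegative and its total mass is $1$.

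\emph{Step 2: the $X$-marginal.} Sum over $y_1,y_2$. Summing $y_1$ first gives $P_{X_1}(x_1)S(x_2,y_2|x_1)$. Summing $y_2$ then gives $P_{X_1}(x_1)P_{X_2|X_1}(x_2|x_1)=P_{X_1X_2}(x_1,x_2)$.

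\emph{Step 3: the $Y$-marginal.} Sum over $x_2$ first, which gives $R(x_1,y_1|y_2)Q_{Y_2}(y_2)$. Summing $x_1$ then gives $Q_{Y_1|Y_2}(y_1|y_2)Q_{Y_2}(y_2)=Q_{Y_1Y_2}(y_1,y_2)$. Together, Steps 2 and 3 show $T\in\cC(P_{X_1X_2},Q_{Y_1Y_2})$.

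The only subtle point is the cross-dependence of the two kernels: $R$ is conditioned on $y_2$, a variable that $S$ distributes, and $S$ is conditioned on $x_1$, a variable that $R$ distributes. A naive single sum would therefore not factor. The remedy is the choice of summation order in each step: in every marginal, first sum out the variable that appears in only one factor, so that this factor collapses to a marginal free of the other kernel's conditioning variable.

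A secondary technicality is that conditional distributions such as $Q_{Y_1|Y_2=y_2}$ or $P_{X_2|X_1=x_1}$ are undefined on zero-probability conditioning values. Since $R(\cdot|y_2)$ is only used through products weighted by $Q_{Y_2}(y_2)$ (and similarly for $S$ with $P_{X_1}(x_1)$), any choice there contributes zero mass. We fix an arbitrary choice of these kernels and note that all the identities above hold regardless.
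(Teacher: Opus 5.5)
Your proof is correct and follows the paper's argument: direct marginalization of $T$, first summing out the variable that appears in only one factor ($y_1$ in $R$, or $x_2$ in $S$), so that factor collapses to $P_{X_1}(x_1)$ or $Q_{Y_2}(y_2)$. This yields the $X$-marginal, the $Y$-marginal, and the product $(X_1,Y_2)$-marginal. Your remark on zero-probability conditioning values is a harmless addition that the paper leaves implicit.
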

\begin{rem}
On the general Polish space $\X_{i},\Y_{i},i=1,2,...,n$, if $P_{X_{i}Y_{i}|X^{i-1}Y^{n}_{i+1}}\in\cC\bigl(P_{X_{i}|X^{i-1}},P_{Y_{i}|Y^{n}_{i+1}}\bigr),$
then
\begin{equation}
\prod^{n}_{i=1}P_{X_{i}Y_{i}|X^{i-1}Y^{n}_{i+1}}\in\cC\bigl(P_{X^{n}},P_{Y^{n}}\bigr),\label{eq:coupling}
\end{equation}
where $P_{X_{i}Y_{i}|X^{i-1}Y^{n}_{i+1}}$ can be decomposed as 
\[
P(\mathrm{d}x_{i},\mathrm{d}y_{i}|x^{i-1},y^{n}_{i+1})=P(\mathrm{d}x_{i}|x^{i-1})P(\mathrm{d}y_{i}|x^{i},y^{n}_{i+1})
\]
and thus, the product in \eqref{eq:coupling} is understood as 
\[
\prod^{n}_{i=1}P(\mathrm{d}x_{i}|x^{i-1})P(\mathrm{d}y_{i}|x^{i},y^{n}_{i+1})=P(\mathrm{d}x^{n})\prod^{n}_{i=1}P(\mathrm{d}y_{i}|x^{i},y^{n}_{i+1}).
\]
Moreover, for each $i$, $X^{i}$ and $Y^{n}_{i+1}$ are independent
under this coupling. 
\end{rem}
\begin{IEEEproof}
Observe that 
\begin{align*}
\sum_{y_{1},y_{2}}T(x_{1},x_{2},y_{1},y_{2}) & =\sum_{y_{2}}\left(\sum_{y_{1}}R(x_{1},y_{1}|y_{2})\right)S(x_{2},y_{2}|x_{1})\\
 & =P_{X_{1}}(x_{1})\sum_{y_{2}}S(x_{2},y_{2}|x_{1})\\
 & =P_{X_{1}}(x_{1})P_{X_{2}|X_{1}}(x_{2}|x_{1})\\
 & =P_{X_{1}X_{2}}(x_{1},x_{2}).
\end{align*}
Similarly, its $(Y_{1},Y_{2})$-marginal satisfies 
\begin{align*}
\sum_{x_{1},x_{2}}T(x_{1},x_{2},y_{1},y_{2}) & =Q_{Y_{1}Y_{2}}(y_{1},y_{2}).
\end{align*}
Hence, $T\in\cC\bigl(P_{X_{1}X_{2}},Q_{Y_{1}Y_{2}}\bigr).$ 

Observe that 
\begin{align*}
\sum_{x_{2},y_{1}}T(x_{1},x_{2},y_{1},y_{2}) & =\sum_{x_{2}}\left(\sum_{y_{1}}R(x_{1},y_{1}|y_{2})\right)S(x_{2},y_{2}|x_{1})\\
 & =P_{X_{1}}(x_{1})\sum_{x_{2}}S(x_{2},y_{2}|x_{1})\\
 & =P_{X_{1}}(x_{1})Q_{Y_{2}}(y_{2}).
\end{align*}
Similarly, its $(X_{1},Y_{2})$ marginal satisfies $T_{X_{2}Y_{1}}=P_{X_{2}}\otimes Q_{Y_{1}}$.
\end{IEEEproof}

\subsection{Superadditivity of $\Phi$}
\begin{lem}[Superadditivity of $\Phi$]
\label{lem:superadditivity} Let $P=P_{X_{1}X_{2}}$ and $Q=Q_{Y_{1}Y_{2}}$
be arbitrary distributions. For costs $c_{i}:\mathcal{X}_{i}\times\mathcal{Y}_{i}\to[0,\infty]$,
$i=1,2$, set 
\[
c_{12}(x_{1},x_{2},y_{1},y_{2}):=c_{1}(x_{1},y_{1})+c_{2}(x_{2},y_{2}).
\]
Then 
\begin{align}
\cT_{c_{12}}(P,Q)\geq & \sum_{y_{2}}Q_{Y_{2}}(y_{2})\cT_{c_{1}}\bigl(P_{X_{1}},Q_{Y_{1}|Y_{2}=y_{2}}\bigr)\nonumber \\
 & +\sum_{x_{1}}P_{X_{1}}(x_{1})\cT_{c_{2}}\bigl(P_{X_{2}|X_{1}=x_{1}},Q_{Y_{2}}\bigr).\label{eq:transport-1}
\end{align}
Moreover, 
\begin{align}
\Phi_{c_{12}}(P,Q)\geq & \sum_{y_{2}}Q_{Y_{2}}(y_{2})\Phi_{c_{1}}\bigl(P_{X_{1}},Q_{Y_{1}|Y_{2}=y_{2}}\bigr)\nonumber \\
 & +\sum_{x_{1}}P_{X_{1}}(x_{1})\Phi_{c_{2}}\bigl(P_{X_{2}|X_{1}=x_{1}},Q_{Y_{2}}\bigr).\label{eq:phi-1}
\end{align}
\end{lem}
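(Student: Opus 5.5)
The plan is to use the opposite-order gluing of Lemma \ref{lem:gluing} to build one coupling of $(P,Q)$ whose cost splits into the two terms on the right of \eqref{eq:transport-1}, and then to handle the entropies by a chain-rule identity.

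\textbf{Transport inequality.} Fix near-optimal couplings for the two families of subproblems. For each $y_{2}$, pick $R(\cdot|y_{2})\in\cC(P_{X_{1}},Q_{Y_{1}|Y_{2}=y_{2}})$ attaining (or $\epsilon$-attaining) $\cT_{c_{1}}(P_{X_{1}},Q_{Y_{1}|Y_{2}=y_{2}})$. For each $x_{1}$, pick $S(\cdot|x_{1})\in\cC(P_{X_{2}|X_{1}=x_{1}},Q_{Y_{2}})$ attaining $\cT_{c_{2}}(P_{X_{2}|X_{1}=x_{1}},Q_{Y_{2}})$. On finite alphabets these choices are finitely many and measurability is not an issue; if a value is $+\infty$, I take a sequence of couplings whose values diverge. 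By Lemma \ref{lem:gluing}, $T:=R\,S$ lies in $\cC(P,Q)$, so $\cT_{c_{12}}(P,Q)\ge \E_{T}[c_{1}(X_{1},Y_{1})]+\E_{T}[c_{2}(X_{2},Y_{2})]$. All terms are nonnegative, so this split is legitimate even with extended values.

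Next I compute the two expectations under $T$. Summing $T$ over $x_{2}$ gives $\sum_{x_{2}}S(x_{2},y_{2}|x_{1})=Q_{Y_{2}}(y_{2})$, so the $(X_{1},Y_{1},Y_{2})$-marginal is $R(x_{1},y_{1}|y_{2})Q_{Y_{2}}(y_{2})$. Hence
\[
\E_{T}[c_{1}]=\sum_{y_{2}}Q_{Y_{2}}(y_{2})\,\E_{R(\cdot|y_{2})}[c_{1}].
\]
Symmetrically, summing over $y_{1}$ gives $\sum_{y_{1}}R(x_{1},y_{1}|y_{2})=P_{X_{1}}(x_{1})$, so the $(X_{1},X_{2},Y_{2})$-marginal is $P_{X_{1}}(x_{1})S(x_{2},y_{2}|x_{1})$. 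Hence
\[
\E_{T}[c_{2}]=\sum_{x_{1}}P_{X_{1}}(x_{1})\,\E_{S(\cdot|x_{1})}[c_{2}].
\]
Substituting the near-optimal choices and letting $\epsilon\to0$ yields \eqref{eq:transport-1}.

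\textbf{Inequality for $\Phi$.} I subtract $H(P)+H(Q)$ from both sides and use the chain rules in opposite orders: $H(P)=H(X_{1})+H(X_{2}|X_{1})$ and $H(Q)=H(Y_{2})+H(Y_{1}|Y_{2})$. The first right-hand term of \eqref{eq:phi-1} carries the entropies
\[
\sum_{y_{2}}Q_{Y_{2}}(y_{2})\left[H(P_{X_{1}})+H(Q_{Y_{1}|Y_{2}=y_{2}})\right]=H(X_{1})+H(Y_{1}|Y_{2}).
\]
The second carries
\[
\sum_{x_{1}}P_{X_{1}}(x_{1})\left[H(P_{X_{2}|X_{1}=x_{1}})+H(Q_{Y_{2}})\right]=H(X_{2}|X_{1})+H(Y_{2}).
\]
These add up exactly to $H(P)+H(Q)$, so \eqref{eq:phi-1} follows from \eqref{eq:transport-1}.

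The main obstacle is only bookkeeping with infinite values. If a transport term is $+\infty$, then both sides of \eqref{eq:phi-1} are $+\infty$, because the entropies are finite on finite alphabets. This is consistent with the convention $0\cdot\infty=0$ in the weighted sums.
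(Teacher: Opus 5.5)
Your proposal is correct and follows the paper's proof. You use the same opposite-order gluing $T=RS$ from Lemma~\ref{lem:gluing}, the same marginal computations for the two cost terms, and the same opposite-direction chain rules for $H(P)$ and $H(Q)$. The only difference is in the extended-valued case: the paper truncates to $\min\{c_i,M\}$ and lets $M\to\infty$, whereas you work directly with (near-)optimal couplings and use nonnegativity to split $\E_T[c_1+c_2]$. Your route is equally valid, and on finite alphabets the supremum is in fact attained even when the cost takes the value $+\infty$.
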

\begin{IEEEproof}
First assume that both costs are finite. For every $y_{2}$, let $R(\cdot|y_{2})\in\cC\bigl(P_{X_{1}},Q_{Y_{1}|Y_{2}=y_{2}}\bigr)$
attain $\cT_{c_{1}}\bigl(P_{X_{1}},Q_{Y_{1}|Y_{2}=y_{2}}\bigr)$,
and for every $x_{1}$, let $S(\cdot|x_{1})\in\cC\bigl(P_{X_{2}|X_{1}=x_{1}},Q_{Y_{2}}\bigr)$
attain $\cT_{c_{2}}\bigl(P_{X_{2}|X_{1}=x_{1}},Q_{Y_{2}}\bigr)$.
Then, by Lemma \ref{lem:gluing}, the distribution $T$ constructed
in \eqref{eq:gluing-plan} forms a coupling of $(P,Q)$. 

The first-coordinate cost induced by $T$ is 
\begin{align*}
\E_{T}[c_{1}(X_{1},Y_{1})] & =\sum_{y_{2}}Q_{Y_{2}}(y_{2})\sum_{x_{1},y_{1}}R(x_{1},y_{1}|y_{2})c_{1}(x_{1},y_{1})\\
 & =\sum_{y_{2}}Q_{Y_{2}}(y_{2})\cT_{c_{1}}\bigl(P_{X_{1}},Q_{Y_{1}|Y_{2}=y_{2}}\bigr).
\end{align*}
Similarly, the second-coordinate cost is 
\begin{align*}
\E_{T}[c_{2}(X_{2},Y_{2})] & =\sum_{x_{1}}P_{X_{1}}(x_{1})\sum_{x_{2},y_{2}}S(x_{2},y_{2}|x_{1})c_{2}(x_{2},y_{2})\\
 & =\sum_{x_{1}}P_{X_{1}}(x_{1})\cT_{c_{2}}\bigl(P_{X_{2}|X_{1}=x_{1}},Q_{Y_{2}}\bigr).
\end{align*}
Adding these identities and using $T\in\cC(P,Q)$ yields \eqref{eq:transport-1}.

We next apply the entropy chain rule in opposite directions: 
\begin{align*}
H(P) & =H(P_{X_{1}})+\sum_{x_{1}}P_{X_{1}}(x_{1})H(P_{X_{2}|X_{1}=x_{1}}),\\
H(Q) & =H(Q_{Y_{2}})+\sum_{y_{2}}Q_{Y_{2}}(y_{2})H(Q_{Y_{1}|Y_{2}=y_{2}}).
\end{align*}
Subtracting these from \eqref{eq:transport-1} yields \eqref{eq:phi-1}.

For extended-valued costs, apply the finite-cost result to $c^{(M)}_{i}:=\min\{c_{i},M\}$
and let $M\to\infty$ or equivalently take supremum over $M>0$. By
swapping the supremum over $M>0$ and the supremum in optimal transport,
we can find that the optimal transport cost induced by $c^{(M)}_{i}$
increases to the optimal transport cost induced by $c_{i}$ as $M\to\infty$.
It means \eqref{eq:transport-1} still holds and thus, so does \eqref{eq:phi-1}.

\end{IEEEproof}

\subsection{Additivity of $\Gamma$}

We first prove the superadditivity of $\Gamma$. Let 
\begin{equation}
P_{WX_{1}X_{2}Y_{1}Y_{2}}=P_{W}P_{X_{1}X_{2}|W}P_{Y_{1}Y_{2}|W}\label{eq:two-letter}
\end{equation}
be an arbitrary feasible synthesis distribution satisfying
\begin{equation}
P_{X_{1}X_{2}Y_{1}Y_{2}}(x_{1},x_{2},y_{1},y_{2})=\pi_{1}(x_{1},y_{1})\pi_{2}(x_{2},y_{2}).\label{eq:product}
\end{equation}
Set $c_{i}=c_{\pi_{i}}$, and thus, $c_{\pi_{1}\otimes\pi_{2}}=c_{1}+c_{2}$.
Applying Lemma~\ref{lem:superadditivity} separately at each $w$
and taking average with respect to $P_{W}$ gives 
\begin{align}
 & \E_{W}\Phi_{c_{1}+c_{2}}\bigl(P_{X_{1}X_{2}|W},P_{Y_{1}Y_{2}|W}\bigr)\nonumber \\
\geq & \E_{W,Y_{2}}\Phi_{c_{1}}\bigl(P_{X_{1}|W},P_{Y_{1}|W,Y_{2}}\bigr)+\E_{W,X_{1}}\Phi_{c_{2}}\bigl(P_{X_{2}|W,X_{1}},P_{Y_{2}|W}\bigr).\label{eq:gluing}
\end{align}

Define $U=(W,Y_{2}),\,V=(W,X_{1}).$ By conditional independence in
\eqref{eq:two-letter}, 
\begin{align}
P_{X_{1}Y_{1}|U} & =P_{X_{1}|W}P_{Y_{1}|W,Y_{2}},\label{eq:markov1}\\
P_{X_{2}Y_{2}|V} & =P_{X_{2}|W,X_{1}}P_{Y_{2}|W}.\label{eq:markov2}
\end{align}
That is, $X_{1}\leftrightarrow U\leftrightarrow Y_{1}$ and $X_{2}\leftrightarrow V\leftrightarrow Y_{2}$.
Their respective source marginals are $\pi_{1}$ and $\pi_{2}$ by
\eqref{eq:product}. Hence both are feasible single-letter synthesis
distributions, and thus,
\begin{align}
\E_{W,Y_{2}}\Phi_{c_{1}}\bigl(P_{X_{1}|W},P_{Y_{1}|W,Y_{2}}\bigr) & \geq\Gamma(\pi_{1}),\label{eq:gamma1}\\
\E_{W,X_{1}}\Phi_{c_{2}}\bigl(P_{X_{2}|W,X_{1}},P_{Y_{2}|W}\bigr) & \geq\Gamma(\pi_{2}).\label{eq:gamma2}
\end{align}
Combining \eqref{eq:gluing}--\eqref{eq:gamma2} and taking the infimum
over arbitrary feasible distributions yields 
\begin{equation}
\Gamma(\pi_{1}\otimes\pi_{2})\geq\Gamma(\pi_{1})+\Gamma(\pi_{2}).\label{eq:superadditivity}
\end{equation}

For the reverse direction, subadditivity is easy to verify. Observe
that 
\begin{equation}
\cT_{c_{1}+c_{2}}(P_{1}\otimes P_{2},Q_{1}\otimes Q_{2})=\cT_{c_{1}}(P_{1},Q_{1})+\cT_{c_{2}}(P_{2},Q_{2}).\label{eq:product-transport}
\end{equation}
Since entropy is additive under product distributions, we obtain 
\begin{equation}
\Phi_{c_{1}+c_{2}}(P_{1}\otimes P_{2},Q_{1}\otimes Q_{2})=\Phi_{c_{1}}(P_{1},Q_{1})+\Phi_{c_{2}}(P_{2},Q_{2}).\label{eq:product-phi}
\end{equation}
Taking independent feasible syntheses for $\pi_{1}$ and $\pi_{2}$
gives 
\begin{equation}
\Gamma(\pi_{1}\otimes\pi_{2})\leq\Gamma(\pi_{1})+\Gamma(\pi_{2}).\label{eq:subadditivity}
\end{equation}

\section{Proof of Theorem \ref{thm:ECS} }

We first prove $\mathcal{R}_{\mathrm{S}}(\pi_{1}\otimes\pi_{2})\subseteq\mathcal{R}_{\mathrm{S}}(\pi_{1})+\mathcal{R}_{\mathrm{S}}(\pi_{2}).$
Similarly to the common information case, let 
\begin{equation}
P_{WX_{1}X_{2}Y_{1}Y_{2}}=P_{W}P_{X_{1}X_{2}|W}P_{Y_{1}Y_{2}|W}\label{eq:two-letter-1}
\end{equation}
be an arbitrary feasible synthesis distribution satisfying
\begin{equation}
P_{X_{1}X_{2}Y_{1}Y_{2}}(x_{1},x_{2},y_{1},y_{2})=\pi_{1}(x_{1},y_{1})\pi_{2}(x_{2},y_{2}).\label{eq:product-1}
\end{equation}
Set $c_{i}=c_{\pi_{i}}$, and thus, $c_{\pi_{1}\otimes\pi_{2}}=c_{1}+c_{2}$.
Then, \eqref{eq:gluing} still holds. Substituting $U=(W,Y_{2}),\,V=(W,X_{1})$,
we obtain 
\begin{align}
 & \E_{W}\Phi_{c_{1}+c_{2}}\bigl(P_{X_{1}X_{2}|W},P_{Y_{1}Y_{2}|W}\bigr)\nonumber \\
\geq & \E_{U}\Phi_{c_{1}}\bigl(P_{X_{1}|U},P_{Y_{1}|U}\bigr)+\E_{V}\Phi_{c_{2}}\bigl(P_{X_{2}|V},P_{Y_{2}|V}\bigr).\label{eq:gluing-1}
\end{align}

In addition, 
\begin{align}
 & I(W;X_{1}X_{2})-\left(I(U;X_{1})+I(V;X_{2})\right)\nonumber \\
 & =H(X_{1}X_{2})+H(W)-H(WX_{1}X_{2})\nonumber \\
 & \qquad-H(WY_{2})-H(X_{1})+H(WX_{1}Y_{2})\nonumber \\
 & \qquad-H(WX_{1})-H(X_{2})+H(WX_{1}X_{2})\nonumber \\
 & =0,\label{eq:R}
\end{align}
where the last line follows since 1) $P_{X_{1}X_{2}Y_{1}Y_{2}}=\pi_{1}\otimes\pi_{2}$
and thus, 
\[
H(X_{1}X_{2})=H(X_{1})+H(X_{2});
\]
2) $X_{1}\leftrightarrow W\leftrightarrow Y_{2}$ and thus, 
\begin{align*}
H(W)+H(WX_{1}Y_{2}) & =H(W)+H(WX_{1})+H(Y_{2}|W)\\
 & =H(WX_{1})+H(WY_{2}).
\end{align*}

Combining \eqref{eq:gluing-1} and \eqref{eq:R} yields $\mathcal{R}_{\mathrm{S}}(\pi_{1}\otimes\pi_{2})\subseteq\mathcal{R}_{\mathrm{S}}(\pi_{1})+\mathcal{R}_{\mathrm{S}}(\pi_{2}).$ 

The reverse direction is easy to verify. Consider independent feasible
syntheses for $\pi_{1}$ and $\pi_{2}$: 
\begin{equation}
P_{UVX_{1}X_{2}Y_{1}Y_{2}}=P_{U}P_{X_{1}Y_{1}|U}P_{V}P_{X_{2}Y_{2}|V}.\label{eq:two-letter-1-1}
\end{equation}
Observe that 
\begin{equation}
\Phi_{c_{1}+c_{2}}(P_{X_{1}|U}\otimes P_{X_{2}|V},P_{Y_{1}|U}\otimes P_{Y_{2}|V})=\Phi_{c_{1}}(P_{X_{1}|U},P_{Y_{1}|U})+\Phi_{c_{2}}(P_{X_{2}|V},P_{Y_{2}|V}).\label{eq:product-phi-1}
\end{equation}
Moreover, $I(UV;X_{1}X_{2})=I(U;X_{1})+I(V;X_{2})$. Since $P_{UVX_{1}X_{2}Y_{1}Y_{2}}$
is a feasible synthesis for $\pi_{1}\otimes\pi_{2}$, we obtain 
\begin{equation}
\mathcal{R}_{\mathrm{S}}(\pi_{1}\otimes\pi_{2})\supseteq\mathcal{R}_{\mathrm{S}}(\pi_{1})+\mathcal{R}_{\mathrm{S}}(\pi_{2}).\label{eq:subadditivity-1}
\end{equation}

\subsection*{Acknowledgements }

\emph{Generative-AI use disclosure:} ChatGPT 6 was used to construct
opposite-order couplings and refine exposition of the paper. All mathematical
claims are the full responsibility of the author. 

\emph{Funding:} This work was supported by the National Key Research
and Development Program of China under grant 2023YFA1009604 and the
NSFC under grant 62101286.

\bibliographystyle{unsrt}
\bibliography{ref}

\begin{thebibliography}{10}

\bibitem{WynerCI}
A.~D. Wyner.
\newblock The common information of two dependent random variables.
\newblock {\em IEEE Transactions on Information Theory}, 21(2):163--179, Mar
  1975.

\bibitem{KLE2014}
G.~R. Kumar, C.-T. Li, and A.~{El Gamal}.
\newblock Exact common information.
\newblock In {\em IEEE International Symposium on Information Theory (ISIT)},
  pages 161--165, Honolulu, Hawaii, USA, 2014.

\bibitem{Vellambi2018}
B.~N. Vellambi and J.~Kliewer.
\newblock New results on the equality of exact and {Wyner} common information
  rates.
\newblock In {\em IEEE International Symposium on Information Theory (ISIT)},
  pages 151--155, Vail, Colorado, USA, Oct 2018.

\bibitem{YuTan2020_exact}
L.~Yu and V.~Y.~F. Tan.
\newblock On exact and {$\infty$}-{R\'enyi} common information.
\newblock {\em IEEE Transactions on Information Theory}, 66(6):3366--3406, Jun
  2020.

\bibitem{LiElgamal2017}
C.-T. Li and A.~{El Gamal}.
\newblock Distributed simulation of continuous random variables.
\newblock {\em IEEE Transactions on Information Theory}, 63(10):6329--6343, Oct
  2017.

\bibitem{YuTan2018}
L.~Yu and V.~Y.~F. Tan.
\newblock Wyner's common information under {R}\'enyi divergence measures.
\newblock {\em IEEE Transactions on Information Theory}, 64(5):3616--3623, May
  2018.

\bibitem{yu2020corrections}
L.~Yu and V.~Y.~F. Tan.
\newblock {Corrections to ``Wyner's common information under R{\'e}nyi
  divergence measures''}.
\newblock {\em IEEE Transactions on Information Theory}, 66(4):2599--2608,
  2020.

\bibitem{yu2026exact}
Lei Yu.
\newblock Exact common information and exact channel synthesis for correlated
  gaussian sources, 2026.

\bibitem{Bennett02}
C.~H. Bennett, P.~W. Shor, J.~A. Smolin, and A.~V. Thapliyal.
\newblock Entanglement-assisted capacity of a quantum channel and the reverse
  {Shannon} theorem.
\newblock {\em IEEE Transactions on Information Theory}, 48(10):2637--2655, Oct
  2002.

\bibitem{Win02}
A.~Winter.
\newblock Compression of sources of probability distributions and density
  operators.
\newblock {\em arXiv:quant-ph/0208131}, 2002.

\bibitem{cuff13}
P.~Cuff.
\newblock Distributed channel synthesis.
\newblock {\em IEEE Transactions on Information Theory}, 59(11):7071--7096,
  2013.

\bibitem{bennett14quantum}
C.~H. Bennett, I.~Devetak, A.~W. Harrow, P.~W. Shor, and A.~Winter.
\newblock The quantum reverse {Shannon} theorem and resource tradeoffs for
  simulating quantum channels.
\newblock {\em IEEE Transactions on Information Theory}, 60(3):2926--2959, Oct
  2014.

\bibitem{Harsha10}
P.~Harsha, R.~Jain, D.~McAllester, and J.~Radhakrishnan.
\newblock The communication complexity of correlation.
\newblock {\em IEEE Transactions on Information Theory}, 56(1):438--449, Oct
  2010.

\bibitem{YuTan2020b}
L.~Yu and V.~Y.~F. Tan.
\newblock Exact channel synthesis.
\newblock {\em IEEE Transactions on Information Theory}, 66(5):2299--2818, May
  2020.

\bibitem{LiElgamal2018}
C.-T. Li and A.~{El Gamal}.
\newblock Strong functional representation lemma and applications to coding
  theorems.
\newblock {\em IEEE Transactions on Information Theory}, 64(11):6967--6978, Oct
  2018.

\bibitem{sriramu2024optimal}
S.~M. Sriramu and A.~B. Wagner.
\newblock Optimal redundancy in exact channel synthesis.
\newblock In {\em 2024 IEEE International Symposium on Information Theory
  (ISIT)}, pages 2401--2406. IEEE, 2024.

\bibitem{Cubitt11}
T.~S. Cubitt, D.~Leung, W.~Matthews, and A.~Winter.
\newblock Zero-error channel capacity and simulation assisted by non-local
  correlations.
\newblock {\em IEEE Transactions on Information Theory}, 57(8):5509--5523, Mar
  2011.

\bibitem{cao2024channelsimulationfinite}
M.~X. Cao, N.~Ramakrishnan, M.~Berta, and M.~Tomamichel.
\newblock Channel simulation: Finite blocklengths and broadcast channels.
\newblock {\em IEEE Transactions on Information Theory}, 70(10):6780--6808,
  2024.

\bibitem{li2026largedevrst}
S.‑B. Li, K.~Li, and L.~Yu.
\newblock Large deviation analysis for the reverse shannon theorem.
\newblock {\em IEEE Transactions on Information Theory}, 72(8):5335--5355, aug
  2026.

\bibitem{oufkir2026exponentsrandomness}
A.~Oufkir, M.~X. Cao, H.-C. Cheng, and M.~Berta.
\newblock Exponents for shared randomness-assisted channel simulation.
\newblock {\em IEEE Transactions on Information Theory}, 72(5):2624--2640,
  2026.

\bibitem{Oufkir2026}
A.~Oufkir, Y.~Yao, and M.~Berta.
\newblock Exponents for classical-quantum channel simulation in purified
  distance.
\newblock {\em IEEE Transactions on Information Theory}, 72(7):4537--4554,
  2026.

\bibitem{yu2022common}
L.~Yu and V.~Y.~F. Tan.
\newblock Common information, noise stability, and their extensions.
\newblock {\em Foundations and Trends in Communications and Information
  Theory}, 19(2):107--389, 2022.

\bibitem{li2024channelsimulation}
C.~T. Li.
\newblock Channel simulation: Theory and applications to lossy compression and
  differential privacy.
\newblock {\em Foundations and Trends in Communications and Information
  Theory}, 21(6):847--1106, 2024.

\end{thebibliography}

\end{document}